\crefname{lemma}{lemma}{lemmas}
\Crefname{lemma}{Lemma}{Lemmas}
\title[Probabilistic Safety Guarantee]{Probabilistic Safety Guarantee for Stochastic Control Systems Using Average Reward MDPs}
\author{%
 \Name{Saber Omidi} \Email{saber.omidi@unh.edu}\\
 \addr Department of Mechanical Engineering\\
       University of New Hampshire\\
       33 Academic Way
       Durham, NH 03824\\
 \AND
 \Name{Marek Petrik} \Email{marek.petrik@unh.edu}\\
 \addr Department of Computer Science\\
       University of New Hampshire\\
       33 Academic Way
       Durham, NH 03824\\
 \AND 
 \Name{Se Young Yoon} \Email{seyoung.yoon@unh.edu}\\
 \addr Electrical and Computer Engineering\\
       University of New Hampshire\\
       33 Academic Way
       Durham, NH  03824 \\
 \AND
 \Name{Momotaz Begum} \Email{momotaz.begum@unh.edu}\\
 \addr Department of Computer Science\\
       University of New Hampshire\\
       33 Academic Way
       Durham, NH 03824
}
\definecolor{NavyBlue}{rgb}{0.0, 0.0, 0.5} 
\newcommand{\PiSD}{\Pi_{\mathrm{SD}}}
\newcommand{\Real}{\mathbb{R}}
\newcommand{\Nats}{\mathbb{N}}
\renewcommand{\P}{\mathbb{P}}
\newcommand{\opt}{^\star}
\DeclareMathOperator*{\argmax}{arg\,max}
\begin{document}

\maketitle

\begin{abstract}%
Safety in stochastic control systems, which are subject to random noise with a known probability distribution, aims to compute policies that satisfy predefined operational constraints with high confidence throughout the uncertain evolution of the state variables. The unpredictable evolution of state variables poses a significant challenge for meeting predefined constraints using various control methods. To address this, we present a new algorithm that computes safe policies to determine the safety level across a finite state set. This algorithm reduces the safety objective to the standard average reward Markov Decision Process (MDP) objective. This reduction enables us to use standard techniques, such as linear programs, to compute and analyze safe policies. We validate the proposed method numerically on the Double Integrator and the Inverted Pendulum systems. Results indicate that the average-reward MDPs solution is more comprehensive, converges faster, and offers higher quality compared to the minimum discounted-reward solution.
\end{abstract}

\begin{keywords}
Safety Critical Systems, Robotics, Average Reward MDPs, Stochastic Control.
\end{keywords}

\section{Introduction}\label{sec:Intro}

Safety-critical algorithms are a vital requirement for stochastic control systems deployed in fields such as autonomous robots, quadrotors, and self-driving cars to prevent injuries or financial losses~\citep{dawood2024dynamic, zhong2025bridging, soleimani2025safe}. Traditionally, safety problems are often framed as a reach-avoid problem, where the agent is guaranteed to reach the intended goal while avoiding actions that may lead to undesirable states~\citep{compton2024learning,rabiee2023safe,wabersich2021predictive,alan2023control}. These approaches achieve safety by treating the domain's stochasticity using a worst-case adversarial, or robust, approach that can tractably compute safe policies by solving the Hamilton-Jacobi-Isaacs (HJI) equation~\citep{chen2017robust, margellos2011hamilton, moon2022state, arnstrom2024data}. Although these standard approaches have been used successfully in some domains, they often struggle in domains with significant uncertainty~\citep{dallas2025control}.    

In this paper, we propose a new approach to computing safe policies in stochastic control by reducing the problem to the average-reward criterion in Markov Decision Processes~(MDPs)~\citep{puterman1990markov}. MDPs represent a flexible framework that is used to model reinforcement learning problems. In particular, we make the following contributions in this paper. (1) We show that average-reward MDPs provide a direct and computationally efficient approach for calculating the probabilistic safety value function using linear programs. (2) Our average-reward-based approach uses standard tools to dispense with the need for artificial discount factors used in prior work~\cite{akametalu2023minimum}. Previously, discount factors were used to improve the computational complexity of computing the fixed point. (3) Our reduction makes it possible to compute states that are safe with high confidence, rather than just computing safe and unsafe states. 
 
Although there is a rich literature on safety, the analysis of probabilistic safety remains insufficiently addressed, particularly when robust control methods are overly conservative or when worst-case scenarios cannot be precisely characterized. Overly conservative control policies significantly constrain the operational range of real-world stochastic systems, where high-risk events are infrequent. Moreover, MDPs commonly employ a discount factor in safety analyses; however, even average reward frameworks for reach-avoid problems do not guarantee probabilistic safety in the system's long-term behavior. 

Our approach departs significantly from common approaches to safety in stochastic control~\cite{chen2017robust}. Most existing literature on the subject guarantees safety by reducing the stochastic outcomes to adversarial noise. The adversarial approach enables the use of deterministic differential game theory tools to formulate and solve the safety problem~\citep{margellos2011hamilton, moon2022state, arnstrom2024data}. Central to this framework is the solution of the Hamilton-Jacobi-Isaacs (HJI) equation~\citep{evans1984differential, abu2006policy, akametalu2023minimum, begzadic2025back}. For instance, \cite{fisac2018general} introduces an HJI-based safety framework as a variational inequality, guaranteeing constraint satisfaction while minimizing disruption to the learning process. This framework also incorporates a Bayesian mechanism to adaptively update the safety analysis as new data is collected. 

In related work, \cite{avila2021reachability} also reduces infinite-horizon reach-avoid problems to the average-reward criterion for MDPs. Similarly to our approach, they probabilistically characterize reachability and reach-avoid solutions, inspired by the stochastic target problem~\citep{soner2002stochastic}. However, our safety objective cannot be immediately cast as a reach-avoid problem. Our work is also related to \cite{gao2023distributional}, which addresses the limitations of traditional MDP reachability in explicitly accounting for transient distributions.

The remainder of the paper is organized as follows. \cref{sec:pro_state} introduces the preliminaries for safety in uncertain control. \cref{sec:MDP}  presents the Average Reward MDP criterion for the probabilistic safety problem. \cref{sec:r-work} reviews related work in the HJI framework. \cref{sec:N_results} presents a numerical validation, and \cref{sec:Conc} concludes the paper.

\section{Problem Statement} \label{sec:pro_state}

The analysis of safety-critical systems requires a formal method to guarantee the satisfaction of state and input constraints even when the system is subject to stochastic disturbances. This section presents the foundational framework for providing probabilistic safety guarantees in control systems affected by disturbances. We seek to identify a set of safe states that account for the system’s uncertain response to control actions.

\textbf{Notation:} Vectors and matrices are denoted by bold lowercase and uppercase letters, respectively. Random variables are distinguished by the use of tildes. Subscripts generally indicate time. The symbols $\mathbb{R}$, $\Real_+$, and $\mathbb{N}$ denote the sets of real, non-negative real, and natural numbers, respectively. We use $\Delta_n \subseteq \mathbb{R}^n$ for $n \in \mathbb{N}$ to represent the probability simplex.
To this end, we assume a probability space $(\Omega, \mathcal{B}, p)$ and we use $\mathbb{X}_{\mathcal{S}}$ to denote the set of all $\mathcal{S}$-valued random variables.

For finite and non-empty state set $\mathcal{S} = \{1, 2, 3, \dots, S\}$ and action set $\mathcal{A} = \{1, 2, 3, \dots, A\}$, denote the state and action variables at time $k \in \mathbb{N}$ as $s_k \in \mathcal{S}$ and $a_k \in \mathcal{A}$, respectively.
The evolution of the state over time is influenced by an external disturbance $d$, which takes values in the set $\mathcal{D} = \{1, 2, 3, \dots, D\}$. 
The disturbance function $d \colon \mathcal{S} \to \mathcal{D}$ depends on the current state. 
The system dynamics are described by a known difference equation, where the function $f \colon \mathcal{S} \times \mathcal{A} \times \mathcal{D} \to \mathcal{S}$ maps $s_k$, $a_k$, and $d(s_k)$ to the next state $s_{k+1}$:
\begin{align}\label{eq:dif_tranistion}
s_{k+1} = f(s_k, a_k, d(s_k)).
\end{align}
A random disturbance is represented by a random variable $\tilde{d} \colon \mathcal{S} \to \mathbb{X}_{\mathcal{D}}$ where
$\mathbb{X}_{\mathcal{D}}$ is 
the set of all $\mathcal{D}$-valued, $\mathcal{B}$-measurable random variables. 
Let $\pi \colon \mathcal{S} \to \mathcal{A}$ denote a deterministic policy that prescribes a control action for each state, and define the set of all stationary deterministic policies as $\PiSD := \mathcal{A}^{\mathcal{S}}$. 
For each policy $\pi \in \PiSD$, we define a stochastic process $\tilde{x}_{k} \colon \mathcal{S} \to \mathbb{X}_{\mathcal{S}}$ where the system dynamics in~\eqref{eq:dif_tranistion} are reformulated as a stochastic process starting from a state $s \in \mathcal{S}$, and the probability of the state transition equals,
\begin{align}\label{eq:stochastic_process}
  \mathbb{P}_s^{\pi} \left[\tilde{x}_{k+1} = f\big(\tilde{x}_k, \pi(\tilde{x}_k),\, \tilde{d}(\tilde{x}_k)\big), \hspace{2pt} \forall k \in \mathbb{N}\right] = 1,  \qquad
  \mathbb{P}_s^{\pi} \left[\tilde{x}_0 = s \right] = 1.
\end{align}
Here, $\mathbb{P}_{s}^{\pi}$ represents the probability measure over the system's future state trajectories, given that it starts in state $s$ and follows a specific policy $\pi$.
The random variables defined in \cref{eq:stochastic_process} as $\tilde{x}_k$ are measurable with respect to the standard filtration of the stochastic process~\citep{jazwinski2007stochastic}.  
After defining \cref{eq:stochastic_process}, we now provide definitions of the constraint and safe sets.
\begin{definition}\label{def:cons_set}
The constraint set $\mathcal{C} \subseteq \mathcal{S}$ represents admissible states.
\end{definition}
While $\mathcal{C}$ defines all instantaneously admissible states, we seek a subset of $\mathcal{C}$ from which a policy can guarantee evolution of \cref{eq:stochastic_process} remains within $\mathcal{C}$ indefinitely with a certain level of probability.

\begin{definition}\label{def:safe_set_1}
The \emph{probabilistically-safe} set $\mathcal{K}_{\alpha}\subseteq \mathcal{S}$ with confidence $\alpha\in [0,1]$ comprises states for which there exists a $\pi\in \PiSD$ that guarantees that the process remains in the constraint set $\mathcal{C}$ with probability of $\alpha$: 
\begin{equation}
\label{def:safe_set_2}
\mathcal{K}_{\alpha} := \left\{s \in \mathcal{S} \mid \exists \pi\in \PiSD, \P^{\pi}_s \left[\tilde{x}_k \in \mathcal{C}, \forall k\in \Nats\right] \ge \alpha  \right\}.
\end{equation}
Let $\mathcal{K}:=\mathcal{K}_1$ be the safe set.
\end{definition}
It is easy to see that 
\[
  1 \ge \alpha_1 \ge \alpha_2 \ge 0
  \quad\implies\quad
  \mathcal{K}_{\alpha_1} \subseteq \mathcal{K}_{\alpha_2}.
\]
The notation $\mathbb{P}^{\pi}_s \left[\tilde{x}_k \in \mathcal{C}, \forall k\in \mathbb{N}\right]$ specifies that the entire future trajectory evolving from $\tilde{x}_k$ under policy $\pi$ must remain within $\mathcal{C}$, ensuring safety over the entire infinite time horizon. The objective of the probabilistic safety problem, therefore, is to compute the probabilistic safe set $\mathcal{K}_{\alpha}$ and then find a policy $\pi \in \Pi_{SD}$ that can gurantee the system evolution according to \cref{eq:stochastic_process} remains in $\mathcal{C}$ with $\alpha$ level confidence.

\section{Average Reward MDPs for Probabilistic Safety} \label{sec:MDP}
This section addresses the probabilistic safety problem within the multichain average reward MDP. First, we define the components and properties of the specialized MDP model. Second, we provide lemmas and proofs to formally connect the average reward framework to the probabilistic safety problem. Finally, we establish the connection between average reward MDPs and linear programs.

Consider the MDP $(\mathcal{S}, \mathcal{A}, p, r, \mu)$ where 
$\mathcal{S}$ and 
$\mathcal{A}$ are finite state and action sets, respectively.  
The function \( p \colon \mathcal{S} \times \mathcal{A} \times \mathcal{S} \to [0,1] \) is the \emph{transition-probability function}, where $p(s,a,s')$ indicates the probability of transitioning to the next state $s'$ from the current state $s$ under action $a$. We define the transitions as 
\begin{equation} \label{eq:transition-function}
p(s, a, s') :=
\begin{cases}
\P\left[s'=f(s, a, \tilde{d}(s)) \right] & \text{if } s\in \mathcal{C}, \\
\mathbbm{1}\{ s' = s \} & \text{otherwise},
\end{cases}
\end{equation}
where $\mathbbm{1}$ denotes the indicator function.
If the MDPs state $s$ is admissible ($s \in \mathcal{C}$), the next state $s'$ probabilistically follows the outcome of the system dynamics, $f(s, a, \tilde{d}(s))$, as described in \cref{eq:dif_tranistion}. Conversely, any occurrence of a violation ($s_k \notin \mathcal{C}$) results in all future states remaining inadmissible (i.e., they become absorbing states). 
The motivation for the construction in \cref{eq:transition-function} is to identify safety violations in the entire horizon by system evolution in \cref{eq:stochastic_process}, as there is a possibility that the system evolution enters the inadmissible states and returns to the admissible state in the infinite horizon.  

The \emph{reward function} $r \colon \mathcal{S} \times \mathcal{A} \times \mathcal{S}  \to \mathbb{R}$ represents the immediate reward. In particular, $r(s,a,s')$ is the reward received when taking an action $a$ in state $s$ and transitioning to $s'$. We define the immediate reward function to reflect the safety objective as follows:
\begin{align}\label{eq:reward_indicator_function}
  r(s,a,s') \;:=\mathbbm{1}\{s \in \mathcal{C}\}.
\end{align}
Definition in \cref{eq:reward_indicator_function} means the reward is 1 when the system is in a safe state and 0 otherwise, and the definition of the reward function in \cref{eq:reward_indicator_function} will encourage the average reward MDP system to remain within the admissible state region. 
An initial distribution over $\mathcal{S}$ is represented by $\mu \in \Delta_{\mathcal{S}}$, and it is assumed that $\mu(s) > 0$ for all $s \in \mathcal{S}$.

The objective for the \emph{ average reward criterion} is:
\begin{align} \label{eq:main_objective}
    \sup_{\pi \in \PiSD}
    \limsup_{N \to \infty} \, \frac{1}{N} \, \mathbb{E}^{\pi}_{\mu} 
    \left[
        \sum_{k=0}^{N-1} \mathbbm{1}\{\tilde{s}_k \in \mathcal{C}\}
    \right],
\end{align}
where the superscript $\pi$ denotes a policy from the stationary policy set $\Pi_{SD}$. Random variable 
$\tilde{s}_k$ represents the state of the MDP system at time $k$ \footnote{To be clear in this paper, $\tilde{x}_k$ refers to the state trajectory evolution according to \cref{def:safe_set_2} while $\tilde{s}_k$ is the trajectory of the MDP model when \cref{eq:transition-function} defined to formulate the probabilistic safety problem.}.
For the average reward criterion, the \emph{gain} function (known as the average-reward value function) $g: \mathcal{S} \to \mathbb{R}$ and the \emph{bias} function $h: \mathcal{S} \to \mathbb{R}$ are defined as:
\begin{align}
    \label{eq:av_MDP_gain_bias}
    g^\pi(s) 
    \!=\! \limsup_{N \to \infty} \frac{1}{N} \, \mathbb{E}^{\pi}_{\mu}\!
    \left[
        \sum_{k=0}^{N-1}\! \mathbbm{1}\{\tilde{s}_k \in \mathcal{C} \}
   \! \right],
    \quad
    h^{\pi}(s) 
    \!=\! \limsup_{N \to \infty} \mathbb{E}^{\pi}_{\mu}\!
    \left[
        \sum_{k=0}^{N-1}\! \left(\mathbbm{1}\{\tilde{s}_k \in \mathcal{C} \} \!-\! g^\pi(\tilde{s}_{k})\! \right)
    \right].
\end{align}
The optimal gain and the optimal policy are:
\begin{align}\label{eq:opt_policy}
g\opt = \max_{\pi \in \PiSD} g^\pi(s), \quad 
    \pi\opt \in \argmax_{\pi \in \PiSD} g^\pi(s).
\end{align}
The gain function in \cref{eq:av_MDP_gain_bias} is equivalent to the probabilistic safety value function, a key identity that tightly couples the system's long-run behavior to this value; as $N \to \infty$, this gain correctly becomes zero upon a state constraint violation. Within the multichain average-reward framework, the bias term is also necessary to satisfy the optimality equations, capturing the system's transient behavior while the gain captures its long-term behavior. The fundamental equations formally define gain and bias, connecting them to the system's underlying transitions~\citep[Theorem 8.2.6]{puterman1990markov}. Moreover, the multichain formulation is necessary because the optimal stationary policy may not be unique in \cref{eq:opt_policy}, leading to multiple recurrent classes~\citep[Section 9.1.3]{puterman1990markov}.

The following theorem characterizes the probabilistic safe set in terms of the optimal gain function, establishing a fundamental link between probabilistic safety and long-run average rewards under optimal policies.

\begin{theorem} \label{theorem:alpha_level}
For every state \(s \in S\) and confidence level \(\alpha \in [0,1]\):
\[
   s \in \mathcal{K}_\alpha
   \;\Longleftrightarrow\;
   \alpha \;\le\; g\opt (s).
\]
\end{theorem}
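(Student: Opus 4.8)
The plan is to reduce the theorem to a single pivotal identity, namely that for every $\pi \in \PiSD$ the gain started at a state equals the infinite-horizon safety probability,
\[
  g^\pi(s) \;=\; \P^\pi_s\!\left[\tilde{x}_k \in \mathcal{C},\ \forall k \in \Nats\right],
\]
and then to pass to the maximum over the (finite) policy set $\PiSD$. Once this identity is in hand, the stated equivalence is immediate from \cref{def:safe_set_1}.

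First I would analyze the running reward average path-by-path. Because the kernel in \cref{eq:transition-function} makes every state of $\mathcal{S}\setminus\mathcal{C}$ absorbing and the reward \cref{eq:reward_indicator_function} equals $1$ on $\mathcal{C}$ and $0$ elsewhere, each sample path of the MDP falls into exactly one of two cases: either it stays in $\mathcal{C}$ forever, so $\frac{1}{N}\sum_{k=0}^{N-1}\mathbbm{1}\{\tilde{s}_k \in \mathcal{C}\} = 1$ for all $N$; or it leaves $\mathcal{C}$ at some finite first-exit time $\tau$ and is thereafter trapped in an unsafe absorbing state, so the partial sum freezes at $\tau$ and the average $\tau/N \to 0$. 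Hence the time-average converges pointwise to $\mathbbm{1}\{\tilde{s}_k \in \mathcal{C},\ \forall k\}$, and in particular the $\limsup$ in \cref{eq:av_MDP_gain_bias} is a genuine limit.

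Next, since the averages lie uniformly in $[0,1]$, I would invoke the bounded convergence theorem to interchange limit and expectation, obtaining $g^\pi(s) = \E^\pi_s\!\left[\mathbbm{1}\{\tilde{s}_k \in \mathcal{C},\ \forall k\}\right] = \P^\pi_s[\tilde{s}_k \in \mathcal{C},\ \forall k]$. To replace the MDP process $\tilde{s}_k$ by the control process $\tilde{x}_k$ of \cref{eq:stochastic_process}, I would couple the two: so long as a path lies in $\mathcal{C}$, the kernel \cref{eq:transition-function} coincides with the law of $f(\cdot,\pi(\cdot),\tilde{d}(\cdot))$, so the two processes agree up to the first exit from $\mathcal{C}$, and the event of never exiting therefore carries identical probability under both. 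This establishes the key identity above.

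Finally I would assemble the equivalence. As $\PiSD = \mathcal{A}^{\mathcal{S}}$ is finite, the supremum defining $g\opt$ in \cref{eq:opt_policy} is attained, so $g\opt(s) = \max_{\pi}\P^\pi_s[\tilde{x}_k \in \mathcal{C},\ \forall k]$. By \cref{def:safe_set_1}, $s \in \mathcal{K}_\alpha$ holds exactly when some policy guarantees safety probability at least $\alpha$, i.e.\ when this maximum is at least $\alpha$, i.e.\ when $\alpha \le g\opt(s)$, which is the claim. The main obstacle I expect is the rigorous treatment of the measure-theoretic step: verifying measurability of the ``stays in $\mathcal{C}$ forever'' event with respect to the natural filtration and justifying the limit–expectation interchange, and confirming that the absorbing construction in \cref{eq:transition-function} faithfully encodes infinite-horizon violation so that every finite excursion contributes zero to the gain.
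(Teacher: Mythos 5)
Your proposal is correct, but it follows a genuinely different route from the paper. The paper splits the equivalence into two directions: the forward direction is a direct lower bound ($\E$ of the partial sum is at least $N$ times the probability of surviving $N$ steps), while the converse is argued through the multichain Bellman equation, a decomposition of $\mathcal{S}$ into $\mathcal{K}$, $\mathcal{C}\setminus\mathcal{K}$, and $\mathcal{S}\setminus\mathcal{C}$, and three appendix lemmas (that $g\opt$ vanishes off $\mathcal{C}$, equals $1$ on $\mathcal{K}$, and that $\mathcal{C}\setminus\mathcal{K}$ is transient). You instead prove the single, stronger identity $g^\pi(s) = \P^\pi_s[\tilde{x}_k \in \mathcal{C},\ \forall k]$ for \emph{every} policy, via a pathwise dichotomy (the running average is identically $1$ on paths that never exit $\mathcal{C}$ and freezes at $\tau/N \to 0$ on paths absorbed at a finite exit time $\tau$), bounded convergence, and a coupling of $\tilde{s}_k$ with $\tilde{x}_k$ up to first exit; the theorem then falls out by maximizing over the finite policy class. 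Your identity subsumes \cref{lem:gain-failed} and \cref{lem:gain-one-safe} as immediate corollaries and dispenses with the transience lemma and the Bellman-equation manipulation entirely; what it costs is the explicit measure-theoretic bookkeeping (measurability of the tail event and the limit--expectation interchange), which you correctly flag and which is unproblematic here since the event is a countable intersection and the averages are uniformly bounded. Arguably your argument is tighter than the paper's, whose final step (asserting that the limiting mass on $\mathcal{K}$ ``by definition equals $g\opt(s)$'') leans on exactly the identity you prove directly.
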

\begin{proof}
Suppose that $s\in \mathcal{K}_\alpha$.
There exists a policy $\hat{\pi} \in \PiSD$ such that for each $N\in \Nats$
\begin{equation*}
\alpha
\le \mathbb{P}^{\hat{\pi}}_s \left[\tilde{x}_k \in \mathcal{C},\;\forall k\in\mathbb N \right] 
  = \mathbb{P}^{\hat{\pi}}_s \left[\tilde{s}_k \in \mathcal{C},\;\forall k\in\mathbb N \right]
  \le \mathbb{P}^{\hat{\pi}}_s \left[\tilde{s}_k \in \mathcal{C},\;\forall k = 0, \dots , N-1 \right] 
\end{equation*}
Therefore, by the definition of gain and the fact that all rewards are non-negative, we get that
\begin{align*}
  g\opt(s)
  &\ge g^{\hat{\pi}}(s) = \limsup_{N \to \infty}\frac{1}{N}\,\mathbb{E}^{\hat{\pi}}_{\mu} 
  \left[\,\sum_{k=0}^{N-1} \mathbbm{1}\{\tilde{s}_k \in \mathcal{C}\}\right]   \\ 
&\ge \limsup_{N \to \infty}\frac{1}{N}  N \cdot \mathbb{P}^{\hat{\pi}}_s \left[\tilde{s}_k \in \mathcal{C},\;\forall k = 0, \dots , N-1 \right] 
\ge \limsup_{N \to \infty}\frac{1}{N}  N \cdot \alpha = \alpha.
\end{align*}

Now consider the optimal gain and summation of the Bellman equation over the entire $\mathcal{S}$: 
    \begin{align*}
          g\opt(s)
          = \sum_{s' \in \mathcal{K}} \mathbb P^{\pi\opt}_s \left[\tilde{s}_k=s'\right]g\opt(s') 
          + \sum_{
          s' \in \mathcal{C}\setminus \mathcal{K}} \mathbb P^{\pi\opt}_s \left[\tilde{s}_k=s'\right]g\opt(s') 
          + \sum_{s' \in \mathcal{S} \setminus\mathcal{C}} \mathbb P^{\pi\opt}_s \left[\tilde{s}_k=s'\right]g\opt(s').
    \end{align*} 
According to 
 Lemma \ref{lem:gain-failed} and Lemma \ref{lem:gain-one-safe} in 
\cref{app:proof-thm3},
we know that $g\opt (s') = 0$ if $s' \in \mathcal{S} \setminus \mathcal{C}$, and $g\opt (s') = 1$ if $s' \in \mathcal{K}$. Therefore,
\begin{align*}
          g\opt(s)
          =  \sum_{s' \in \mathcal{K}} \mathbb P^{\pi\opt}_s \bigl[\tilde{s}_k=s'\bigr] \cdot 1
          +\sum_{s' \in \mathcal{C}\setminus \mathcal{K}} \mathbb P^{\pi\opt}_s \bigl[\tilde{s}_k=s'\bigr]g\opt(s'). 
\end{align*}
Since $\mathcal{C} \setminus \mathcal{K}$ is a transient set based on Lemma \ref{lem:transient_state} (see \cref{app:proof-thm3}), we have 
\(
\lim_{k \to \infty} \mathbb{P}^{\pi\opt }_s\!\left[\tilde{s}_k = s'\right] = 0, \quad \forall s' \in \mathcal{C} \setminus \mathcal{K}.
\)
Taking the limit:
    \[
          g\opt(s)
          = \lim_{k\to \infty} \left(  
            \sum_{s' \in \mathcal{K}} \mathbb P^{\pi\opt}_s \left[\tilde{s}_k=s'\right] \cdot  1 
          + \sum_{
          s' \in \mathcal{C}\setminus \mathcal{K}} \mathbb P^{\pi\opt}_s \left[\tilde{s}_k=s'\right]g\opt (s') \right)
          = \lim_{k\to \infty} 
            \sum_{s' \in \mathcal{K}} \mathbb P^{\pi\opt}_s \left[\tilde{s}_k=s'\right]. 
    \]
Therefore, the limiting probability of being in $\mathcal{C}$ is entirely concentrated on $\mathcal{K}$:
\[
\lim_{k \to \infty} \sum_{s' \in \mathcal{C}} \mathbb{P}^{\pi\opt }_s\!\left[\tilde{s}_k = s'\right]
= \sum_{s' \in \mathcal{K}} \lim_{k \to \infty} \mathbb{P}^{\pi\opt }_s\!\left[\tilde{s}_k = s'\right].
\]
By the definition of the probabilistically--safe set 
        \[
            \mathbb{P}^{\hat{\pi}}\bigl[\,\tilde{x}_k(s)\in \mathcal{C},\;\forall k\in\mathbb N\bigr]
            = 
            \lim_{k\to \infty} 
            \sum_{s' \in \mathcal{K}} \mathbb P^{\pi\opt}_s \left[\tilde{s}_k=s'\right] 
            \;\ge\; \alpha.
        \] 
This limiting sum defines the probability of eventually reaching a state in \(\mathcal{C}\), which, by definition, equals \(g\opt(s)\). Since \(g\opt(s) \geq \alpha\), the condition is satisfied, and this completes the proof.
\end{proof}

Intuitively, \cref{theorem:alpha_level} states that the probability of staying safe at confidence level $\alpha$ is connected by the $\alpha$ probability of returning from a current transient state to the safe state in $\mathcal{K}$.

With the theoretical connection between multichain average reward MDPs and probabilistic safety established, we can now leverage the properties of infinite average reward MDPs to formulate this safety problem as a corresponding linear program.
The infinite-horizon average reward problem proposed in \cref{eq:opt_policy} can be solved using the following primal linear program~\citep[Section 9.3]{Puterman2005}:
\begin{equation} \label{eq:lp-multichain-primal}
\begin{aligned}
  \operatorname*{minimize}_{g \in \Real_+^\mathcal{S},\, h \in \Real_+^\mathcal{S}} \quad &
   \sum_{s \in \mathcal{S}} \alpha_s\, g(s)\\
  \operatorname{s.\,t.} \quad &
    g(s) \ge \sum_{s'\in \mathcal{S}} g(s') \, p(s,a,s'), \quad \forall\, s \in \mathcal{S}, a\in \mathcal{A} \\
    & g(s)  \!+\!  h(s) \ge \mathbbm{1}\{s_k \in \mathcal{C} \}\!+\! \sum_{s'\in \mathcal{S}} h(s') \, p(s,a,s'),
\qquad                                                                                                  \forall\, s\in \mathcal{S}, a \in \mathcal{A}
\end{aligned}
\end{equation}
where the value $\alpha_{s_0} \in \Real_+^\mathcal{S}$ may be arbitrary as long as it satisfies that
$
 \alpha_{s_0} \!>\! 0, \; \!\forall\! s \in \mathcal{S},  \hspace{1pt}\sum_{s \in \mathcal{S}} \alpha_s \!=\! 1.
$
To determine the corresponding actions, solving the dual linear program is also necessary. Equation \eqref{eq:main_objective} can be solved using the following dual linear program.
\begin{equation} \label{eq:lp-multichain-dual}
\begin{aligned}
  \operatorname*{maximize}_{z \in \Real_+^{\mathcal{S} \times  \mathcal{A}},\, y \in \Real_+^{\mathcal{S} \times  \mathcal{A}}} \quad &
   \sum_{s \in \mathcal{S}} \sum_{a \in \mathcal{A}} z(s,a) \mathbbm{1}\{s_k \in \mathcal{C} \}, \\
  \operatorname{s.\,t.} \quad &
   \sum_{a \in \mathcal{A}} \!z(s',a) \!-\! \sum_{s\in \mathcal{S}}\! \sum_{a\in \mathcal{A}} \!z(s,a) \, p(s,a,s') \!=\! 0, \forall\, s'\!\in \mathcal{S} \\
  & \sum_{a\in \mathcal{A}} y(s',a) -  \sum_{s\in \mathcal{S}} \sum_{a\in \mathcal{A}} y(s, a) \, p(s,a,s')=   \alpha_{s_0} - \sum_{a \in \mathcal{A}} z(s,a), \; \forall\, s'\in \mathcal{S}.
\end{aligned}
\end{equation}
We construct a policy from any feasible solution to the linear program in \eqref{eq:lp-multichain-dual} for each $s \in \mathcal{S}$ and $a \in \mathcal{A}$ as:
\begin{equation} \label{eq:policy-construction}
  \pi(s) =
  \begin{cases}
     \frac{z(s,a)}{\sum_{a'\in \mathcal{A}} z(s,a)}  &\text{if }\sum_{a'\in \mathcal{A}} z(s,a') > 0,  \\
     \frac{y(s,a)}{\sum_{a' \in \mathcal{A}} y(s,a') }  &\text{otherwise}. \\
  \end{cases}
\end{equation}
It should be mentioned that for each \(s \in \mathcal{S}\), \(z(s,a) > 0\) for exactly one \(a \in \mathcal{A}\), 
and for each \(s \notin \mathcal{S}\), \(y(s,a) > 0\) for exactly one \(a \in \mathcal{A}\), there exists a deterministic policy~\citep{Puterman2005}.

\section{Comparison with Existing Methods}\label{sec:r-work}

In this section, we compare the proposed Average Reward (AVR) approach with three related studies: a robust safety framework based on HJI reachability~\citep{fisac2018general}, the Minimum Discounted Reward (MDR) formulation for reachable sets \citep{akametalu2023minimum}, and the reachability of Markov chains using long-run average rewards~\citep{avila2021reachability}.

HJI-based approaches, formulated in the continuous domain, compute a single, robustly safe set for a deterministic system under worst-case disturbances. These methods numerically solve the HJI PDE using discrete-time Bellman equations, which typically require a specific initialization to converge. The MDR formulation~\citep{akametalu2023minimum}, also continuous, addresses this by introducing a discount factor, which transforms the problem into a contraction mapping that ensures convergence from any initialization. In the MDR method, this discount factor directly influences the size of the final safe set, often represented by the zero-level set of a signed distance function.

In contrast, the AVR method is formulated directly as a discrete MDP with known stochastic transition probabilities. The AVR method employs the average reward criterion to solve the infinite-horizon problem and is formulated without the use of a discount factor. For computation, the AVR method solves the problem using a standard linear program (\cref{eq:lp-multichain-primal,eq:lp-multichain-dual}). Finally, the AVR computation yields the optimal gain function, which provides a probabilistic safety function (\cref{theorem:alpha_level}) where different level sets correspond to different safety confidence levels.

Although our proposed and reachability for Markov chains methods both use the average reward criterion, the AVR method is formulated to solve the indefinite probabilistic safety problem. This objective differs from the standard reach-avoid task. While the reachability framework is applied to find the probability of reaching a set before avoiding, we aim to find the probability of remaining in the safe constraint set indefinitely (that is, the probability of never reaching the unsafe set). This difference in objective is reflected in the construction of the transition function. To solve the reach-avoid problem, the Avila and Junca method constructs a modified transition kernel where both the target set and the avoid set are made absorbing. In contrast, our method constructs a transition function (\cref{eq:transition-function}) where only the unsafe set is made absorbing. This specific construction of the transition model enables the optimal gain function to represent the probability of indefinite safety, allowing us to formally prove in \cref{theorem:alpha_level} that the optimal gain function is directly equal to this maximum probability.

\section{Numerical Validation}
\label{sec:N_results}

We evaluate our approach numerically on two benchmark control systems: a double integrator and an inverted pendulum. The simulations were implemented in  Julia, using the JuMP modeling language and the MOSEK optimization solver. Our AVR method is solved as a linear program, whereas the MDR method is implemented for comparison and solved using value iteration (VI). All computations were executed on a computer equipped with a 64-core AMD Ryzen Threadripper 3970X CPU and 256 GB of RAM.

The continuous state and action spaces of the two benchmark control systems are discretized into uniform grids. The stochastic transition probabilities are constructed using a Monte Carlo approach: for each discrete state-action pair, 100 simulations are run with disturbances from the known distribution and bounds. Each resulting next state in the continuous state space is mapped to the closest discretized state on the grid, identified using a k-nearest-neighbors search (KDTree and knn)\footnote{Here, KNN serves as a non-parametric function approximator, where the value of a state is estimated from the values of its nearest neighbors on the discretized grid.}. The transition probabilities are thus computed from the frequency of these assignments.

The discrete‑time Double Integrator dynamics are considered as 
\[
 \mathrm{x}_{k+1} = \mathrm{x}_{k} + \mathrm{v}_{k}\,\Delta t,\quad
\mathrm{v}_{k+1} = \mathrm{v}_{k} + \bigl({u}_{k}+\tilde{d}_k  \hspace{1pt}\bigr)\,\Delta t.
\]
$\mathrm{x}$ is position and $\mathrm{v}$ is velocity. The state space is bounded by $\mathcal{S}=\{(\mathrm{x},\mathrm{v})|-1\le \mathrm{x}\le5,-5\le \mathrm{v}\le5\}$. The disturbance $\tilde{d}_k$ is drawn from a Normal distribution $\mathcal{N}(0, 1)$, clamped to $[-1,1]$. The control input $u_{k}$ is in $[-2,2]$, and the constraint set is $\mathcal{C} = \{(\mathrm{x},\mathrm{v})\in\mathbb{R}^{2}|0\le \mathrm{x}\le4,-3\le \mathrm{v}\le3\}$. The time step is $\Delta t=0.1$. For this specific system, we use a $161 \times 161$ state grid and 81 actions.

The discrete-time dynamics of a stochastic, nonlinear Inverted Pendulum are:
\[\theta_{k+1}=\theta_{k}+\omega_{k}\Delta t, \qquad
\omega_{k+1}=\omega_{k}+(\frac{g_{p}}{l_{p}}sin(\theta_{k})+\frac{1}{m_{p}l_{p}^{2}}u_{k}+\tilde{d_{k}})\Delta t\]
Here, the state consists of the angle $\theta_{k}$ and the angular velocity $\omega_{k}$, bounded by $\mathcal{S}=\{(\theta,\omega)|-0.5\le\theta\le0.5, -1.0\le\omega\le1.0\}$. The disturbance $\tilde{d}_{k}$ is drawn from a Normal distribution $\mathcal{N}(0,1)$, clamped to $[-0.75,0.75]$. The control input $u_{k}$ is in $[-3,3]$, and the constraint set is $\mathcal{C}=\{(\theta,\omega)\in\mathbb{R}^{2}|-0.3\le\theta\le 0.3, -0.6\le\omega\le0.6\}$. The time step is $\Delta t=0.1$. For this system, we use a $201 \times 201$ state grid and 81 discrete actions.

\begin{figure}
\centering
    \begin{minipage}[b]{0.5\textwidth}
    \flushright
        \includegraphics[height=4.5cm,width=6cm]{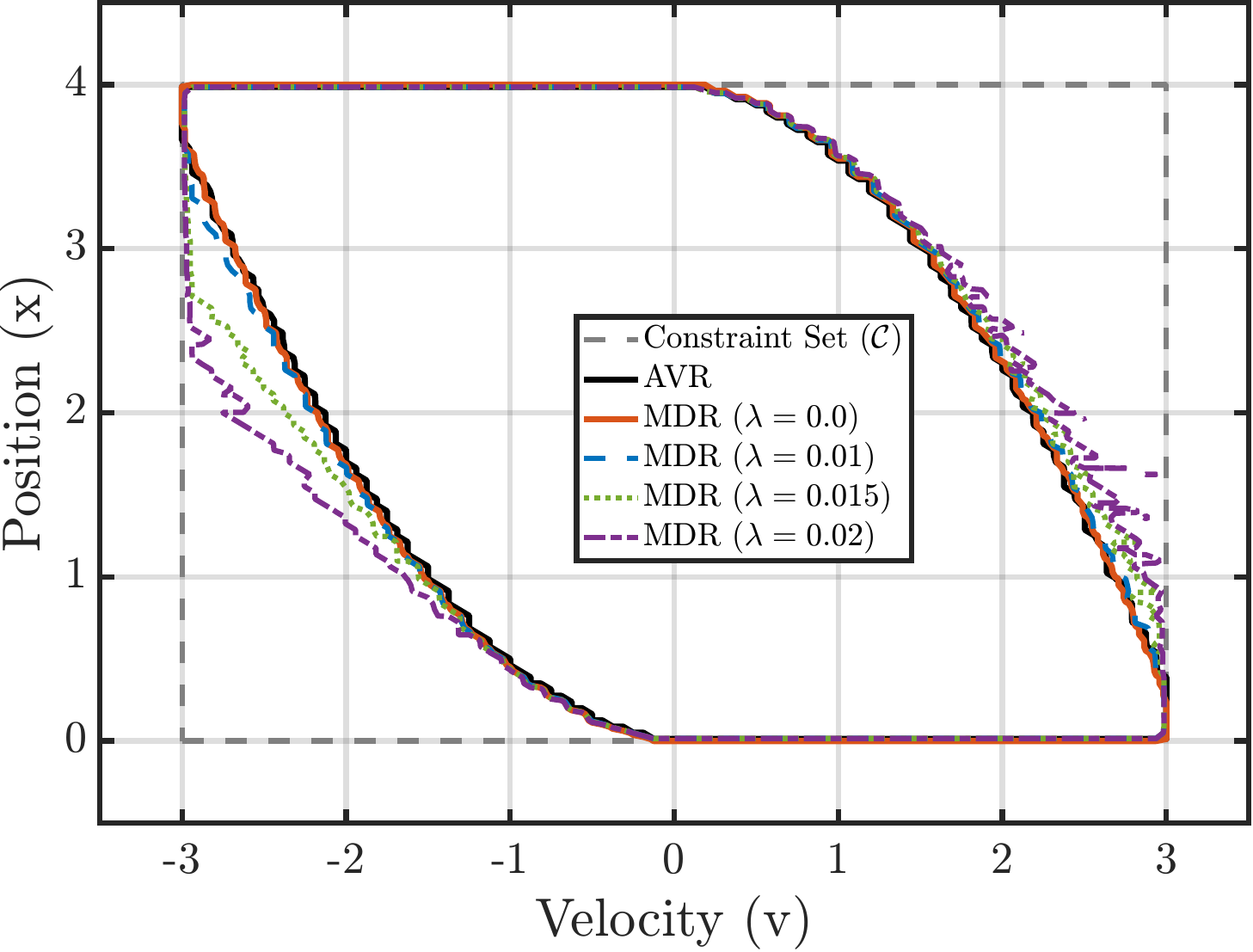}
    \end{minipage}%
    \begin{minipage}[b]{0.5\textwidth}
    \centering
        \includegraphics[height=4.5cm, width=6cm]{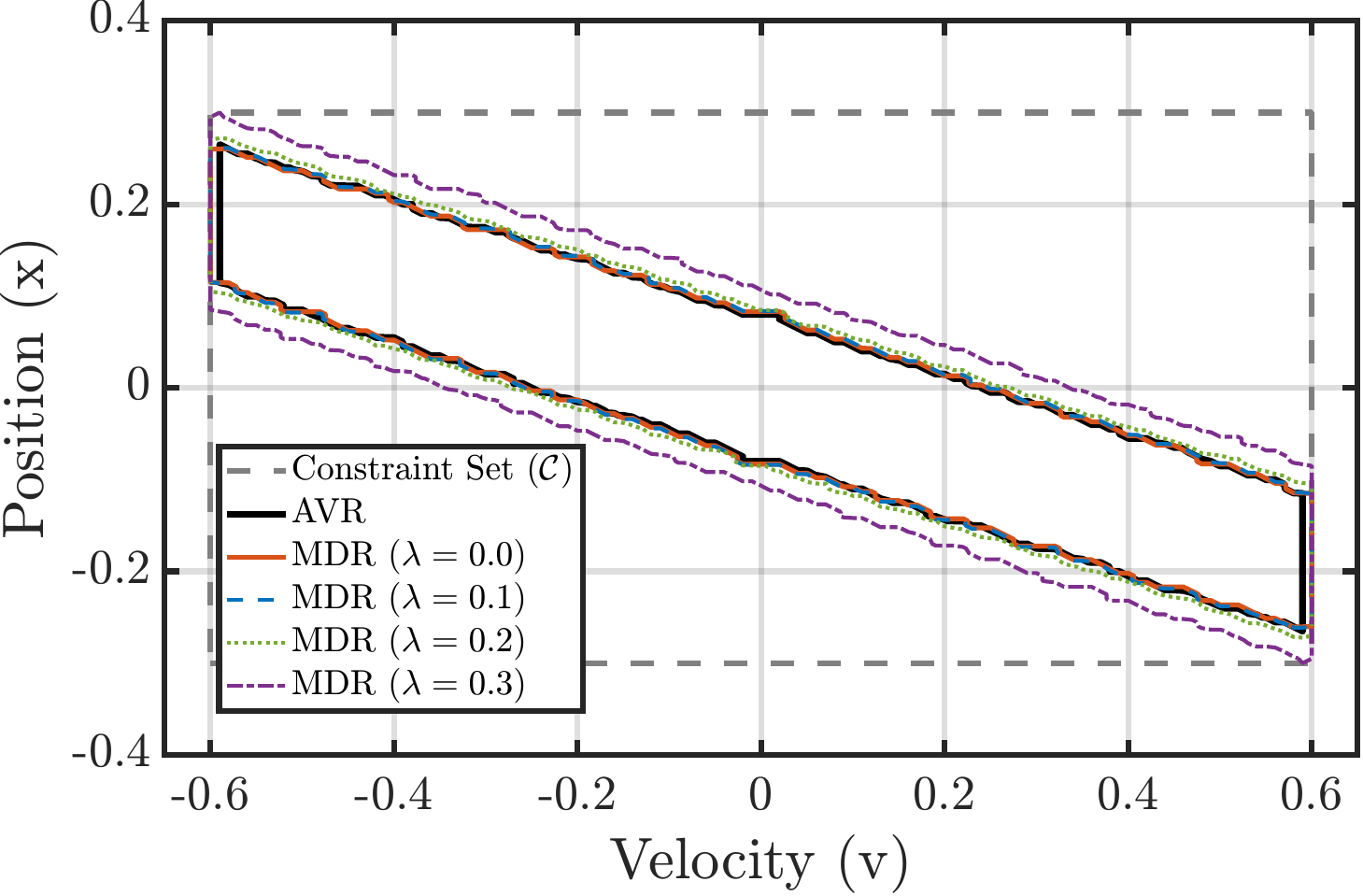}
    \end{minipage}

    \caption{Safe sets computed by AVR and MDR for the Double Integrator (right) and Inverted Pendulum (left). The dashed gray line outlines the constraint set $\mathcal{C}$, the solid black line shows AVR's safe set, and the colored lines show MDR's safe set ($Z(\bm{x})=0$) for different $\lambda$.}
    \label{fig:comparison_plots}
\end{figure}
\cref{fig:comparison_plots} illustrates the computed safe sets for the Double Integrator (left) and Inverted Pendulum (right). The figure shows the safe set boundary from our AVR method (solid black line) alongside the boundaries from the MDR method using various discount rates ($\lambda$). The boundaries from the MDR method are observed to change as $\lambda$ is varied. The AVR result aligns with the undiscounted ($\lambda=0$) MDR case, considered the analytical solution~\citep{akametalu2023minimum}.

\begin{figure}
    \centering
    \begin{minipage}[b]{0.5\textwidth}
        \flushright
        \includegraphics[height=4.5cm]{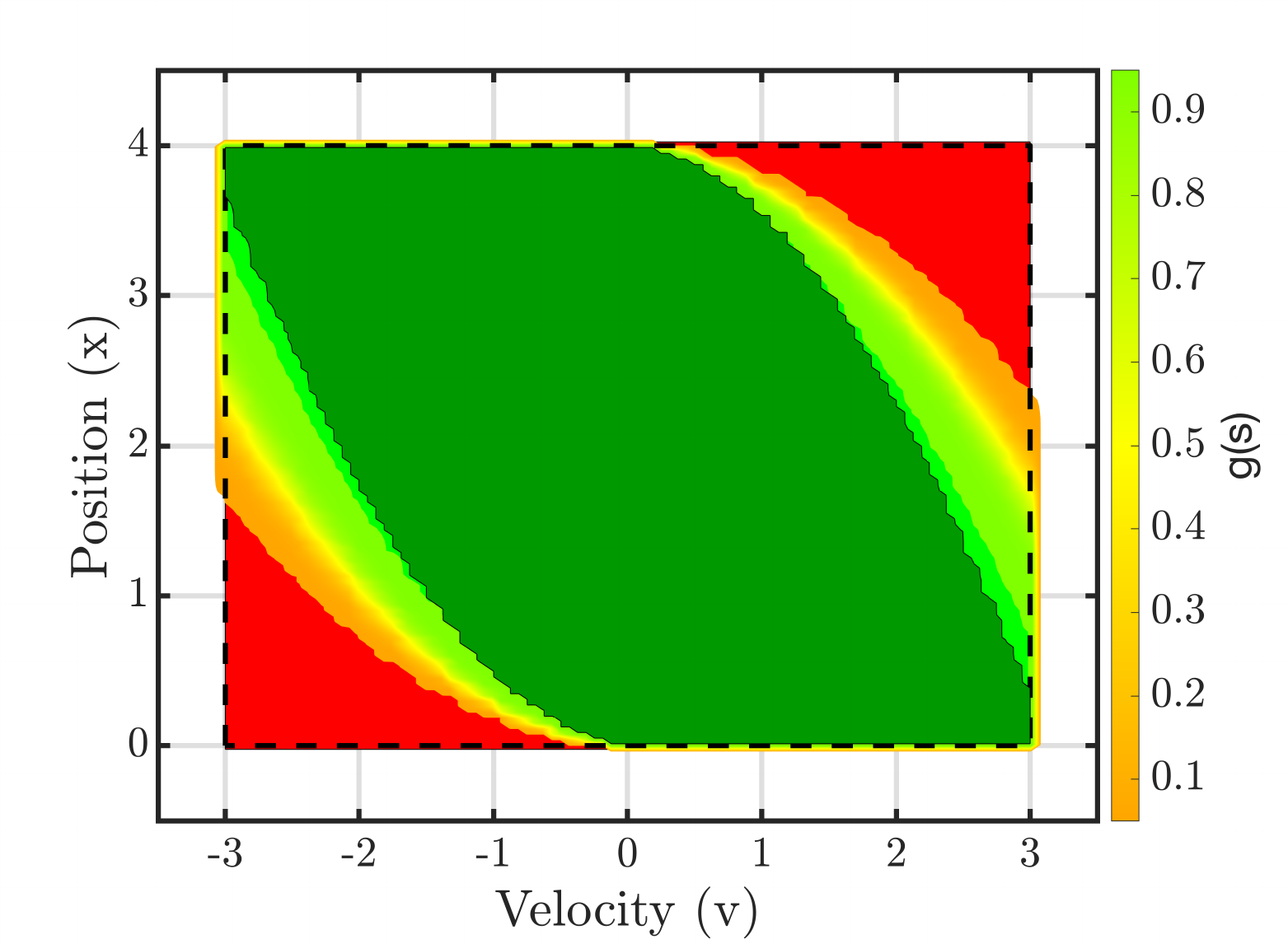}
    \end{minipage}%
    \hfill
    \begin{minipage}[b]{0.5\textwidth}
        \centering
        \includegraphics[height=4.5cm]{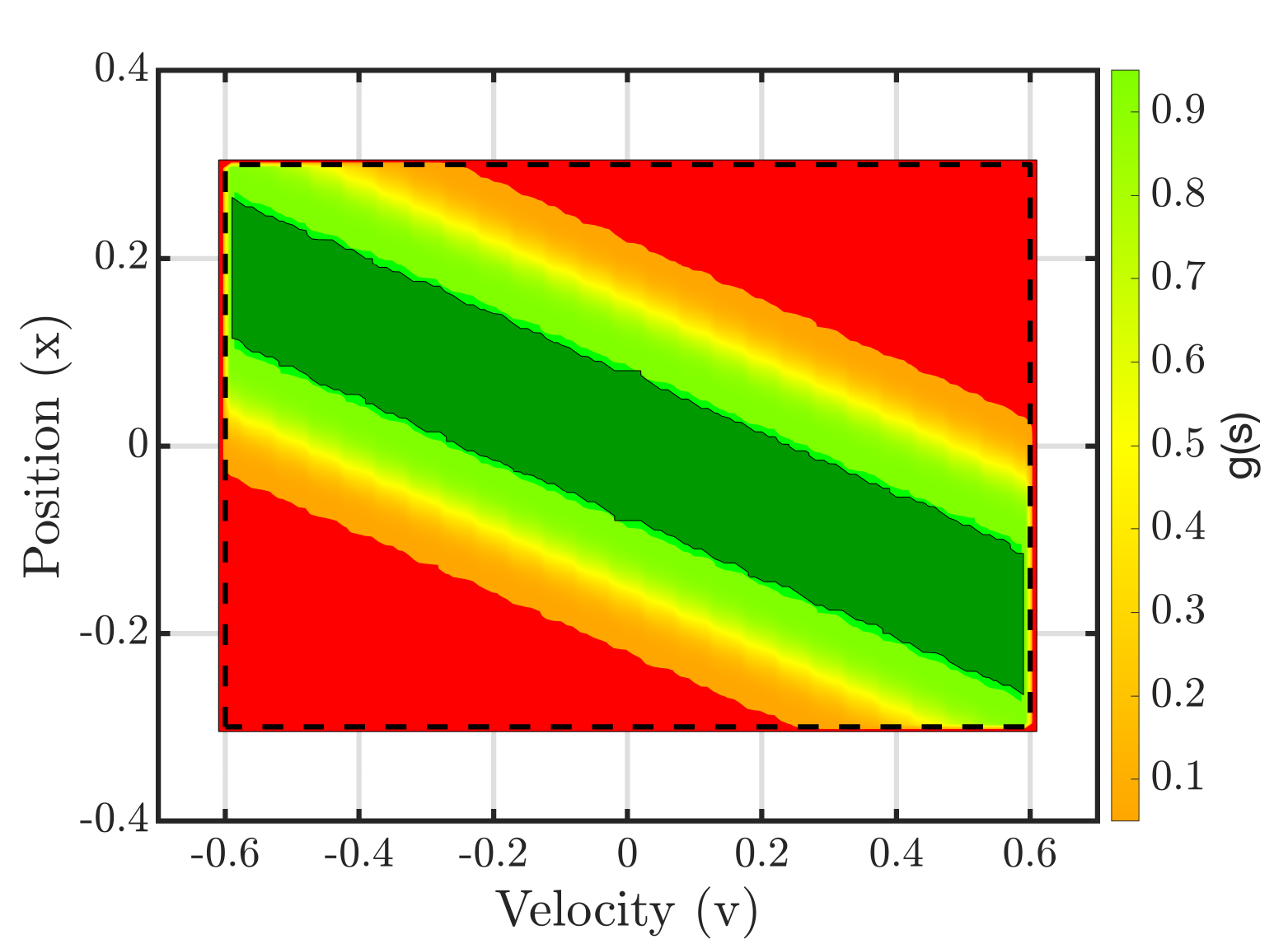}
    \end{minipage}
    \caption{Safe sets computed using AVR as level sets of $g(s)$ for (a) Double Integrator and (b) Inverted Pendulum systems. The set $\mathcal{K}$ is dark green.}
    \label{fig:g_plots}
\end{figure}
\cref{fig:g_plots} presents the numerical results for the proposed AVR method, visualizing the optimal gain function, $g\opt(s)$, as a safety distribution for the Double Integrator (right) and Inverted Pendulum (left). In each subplot, the dashed black lines are the constraint set, $\mathcal{C}$, within which the color map indicates the level of probabilistic safety. The dark green area represents the safe set, $\mathcal{K}$, where the optimal gain $g\opt(s)=1$. Conversely, the red region corresponds to the zero-level set where $g\opt(s)=0$, representing unsafe states. The gradient from orange to light green illustrates the transient states between zero-level set and one-level set of $g\opt(s)$, showing states with varying probabilities of remaining safe as a function of the gain value.

\begin{figure}
    \centering
\includegraphics[height=4.5cm]{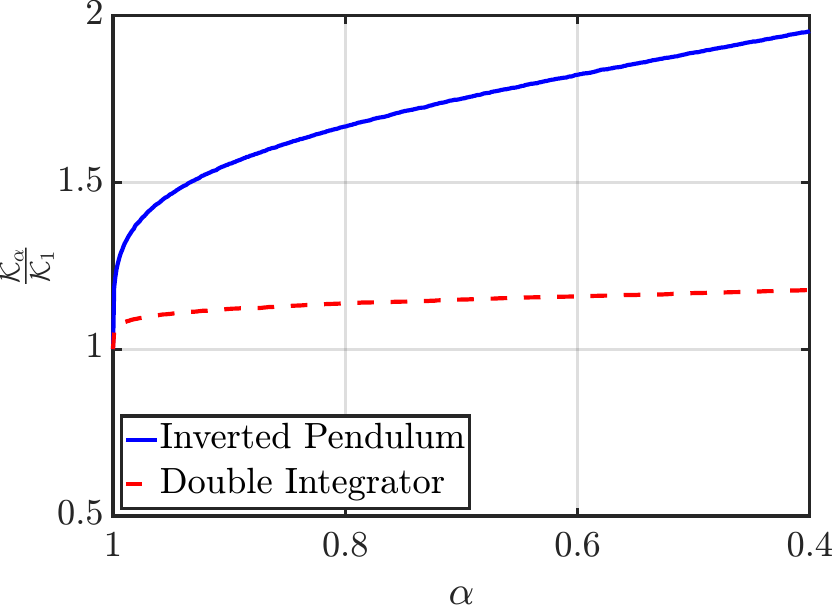}
\caption{The relative size of the probabilistically-safe set $\mathcal{K}_{\alpha}$ as a function of the safety confidence level $\alpha$. The y-axis plots the ratio of the size of $\mathcal{K}_{\alpha}$ to the 100\% safe set, $\mathcal{K}$. Results are shown for the Double Integrator (dashed red) and Inverted Pendulum (solid blue).}
    \label{fig:ratio}
\end{figure}
\cref{fig:ratio} plots the relative size of the probabilistically-safe set $\mathcal{K}_{\alpha}$ as a function of the safety confidence level $\alpha$. The y-axis shows the ratio of the size of the $\alpha$-safe set to the size of $\mathcal{K}$. Both systems start at a ratio of 1 when $\alpha=1.0$. For the Double Integrator (dashed red line), the curve remains relatively flat, increasing only slightly as $\alpha$ decreases. In contrast, the Inverted Pendulum (solid blue line) shows a significant and steep increase, with the safe set's relative size approaching 2.0 as $\alpha$ moves toward 0.4.

\begin{figure}
    \centering
    \begin{minipage}[b]{0.5\textwidth}
        \flushright
        \includegraphics[height=4.5cm]{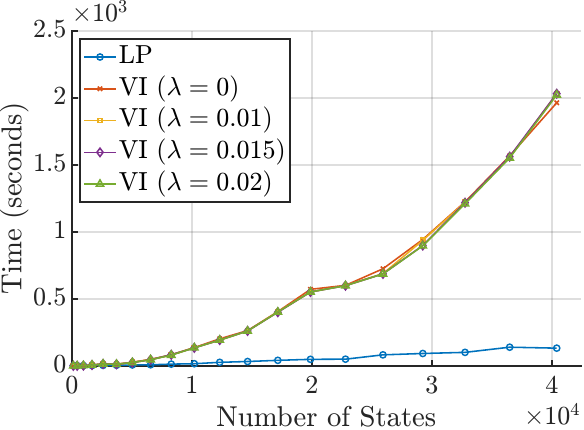}
    \end{minipage}%
    \hfill
    \begin{minipage}[b]{0.5\textwidth}
        \centering
        \includegraphics[height=4.5cm]{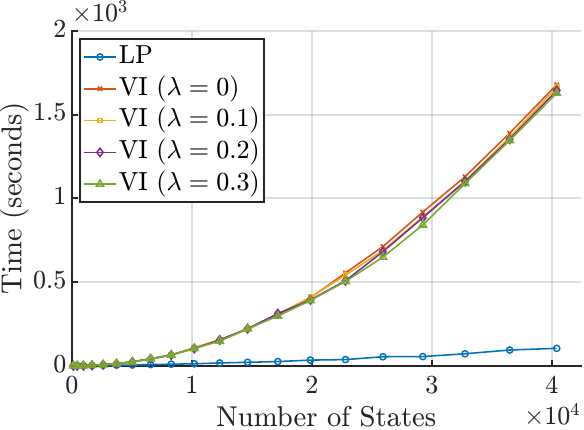}
    \end{minipage}
    \caption{Runtime of the AVR (LP) and the MDR (VI) method for the Inverted Pendulum (left) and the Double Integrator (right) as a function of the total number of discrete states.
    }
    \label{fig:time_plot}
\end{figure}
\cref{fig:time_plot} compares the computational time for the AVR (LP) method and the MDR (VI) method on the Inverted Pendulum (left) and Double Integrator (right) systems. The plots show that the computation time for the AVR method, which is solved as a single linear program, scales much more efficiently with the number of states. In contrast, the time for the value iteration grows substantially faster. The plots also indicate that the value iteration computation time is largely insensitive to the specific discount rate chosen, as the lines for all $\lambda$ values are close to each other.
\section{Conclusion and Future Work}
\label{sec:Conc}
This study introduces a framework for probabilistic safety analysis in stochastic control systems by applying the average-reward MDP properties. A fundamental link was established between the probabilistic safe set $\mathcal{K}_{\alpha}$ and the optimal gain function $g\opt(s)$, which serves as the probabilistic safety value function. This approach eliminates dependence on the arbitrary discount factor, which is critical in MDR and other discounted methods, a factor known to affect both the convergence rate and the size of the safe set. By formulating the safety objective as an infinite-horizon average reward problem, the probabilistic safety value function can be computed efficiently by solving a standard linear program, which is significantly faster than value iteration (VI) methods typically used for discounted formulations. Numerical results on the Double Integrator and Inverted Pendulum systems confirm that the AVR method yields accurate and safe set boundaries.

Future research focuses on extending this framework to move beyond the computation of safe policies to selecting and utilizing only those safe policies that also satisfy predefined optimality conditions. Specifically, the objective will be to choose a subset of probabilistically safe policies that maximize performance metrics, thereby coupling safety guarantees with a desired level of system optimality. 

\newpage

\bibliography{Ref}

\appendix

\section{Lemmas for Theorem 3}
\label{app:proof-thm3}

\begin{lemma}\label{lem:transient_state}
  Each $s \in \mathcal{C} \setminus \mathcal{K}$ is transient:
    \[
   \lim_{k \to \infty} \mathbb{P}^{\pi} _s\!\left[\tilde{s}_k = s \right] = 0, \quad \forall \pi \in \PiSD . 
  \]
\end{lemma}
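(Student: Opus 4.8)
The plan is to fix an arbitrary stationary policy $\pi \in \PiSD$ and analyze the finite-state Markov chain it induces through the transition function \eqref{eq:transition-function}. I would first recall the standard dichotomy for finite Markov chains: a state $s$ is transient if and only if $\lim_{k\to\infty}\mathbb{P}^\pi_s[\tilde{s}_k=s]=0$, and $s$ is transient precisely when it belongs to no closed recurrent class. Hence it suffices to prove that, under every $\pi$, no state of $\mathcal{C}\setminus\mathcal{K}$ is recurrent; the claimed limit then follows immediately and uniformly in $\pi$.

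The core of the argument is a proof by contradiction. Suppose some $s\in\mathcal{C}\setminus\mathcal{K}$ were recurrent under $\pi$, and let $R$ be its recurrent class. Recurrent classes are closed, so once the chain enters $R$ it never leaves. By construction \eqref{eq:transition-function}, every state in $\mathcal{S}\setminus\mathcal{C}$ is absorbing and therefore forms a singleton recurrent class; consequently $R$ cannot simultaneously contain such an absorbing state and the state $s\in\mathcal{C}$. This forces $R\subseteq\mathcal{C}$. Starting from $s$, the chain then remains inside $R\subseteq\mathcal{C}$ for all time almost surely.

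To finish, I would invoke the identity already used in \cref{theorem:alpha_level}, namely that the events $\{\tilde{s}_k\in\mathcal{C},\ \forall k\}$ and $\{\tilde{x}_k\in\mathcal{C},\ \forall k\}$ have equal probability because the MDP dynamics coincide with \eqref{eq:dif_tranistion} on $\mathcal{C}$. The previous step gives $\mathbb{P}^\pi_s[\tilde{s}_k\in\mathcal{C},\ \forall k\in\mathbb{N}]=1$, hence $\mathbb{P}^\pi_s[\tilde{x}_k\in\mathcal{C},\ \forall k\in\mathbb{N}]=1$, so $\pi$ itself witnesses $s\in\mathcal{K}$---contradicting $s\in\mathcal{C}\setminus\mathcal{K}$. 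Therefore every state of $\mathcal{C}\setminus\mathcal{K}$ is transient under the arbitrary policy $\pi$, which yields the stated limit.

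I expect the main obstacle to be the justification that a recurrent class containing a state $s\in\mathcal{C}$ must lie entirely within $\mathcal{C}$ (and hence that such a class forces $s$ into $\mathcal{K}$): this is the step that genuinely uses the absorbing construction \eqref{eq:transition-function} together with the definition of $\mathcal{K}$, and it is where one must be careful to distinguish ``positive probability of staying in $\mathcal{C}$'' from ``probability one,'' since $s\notin\mathcal{K}$ only rules out the latter. Everything else---the transience/limit equivalence and the $\tilde{x}$ versus $\tilde{s}$ identification---is standard and can be cited.
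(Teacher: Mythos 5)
Your proof is correct, and it takes a genuinely different --- and substantially more complete --- route than the paper's. The paper's argument is purely definitional: it assumes recurrence, asserts that recurrence forces $\lim_k \mathbb{P}^\pi_s[\tilde{s}_k=s]=1$ (which is not true in general --- for a multi-state or periodic recurrent class the limit, when it exists, is a stationary probability strictly less than $1$), declares this to ``contradict'' transience, and concludes; at no point does it use the hypothesis $s\in\mathcal{C}\setminus\mathcal{K}$, so it does not actually explain why such a state cannot be recurrent. Your argument supplies exactly the missing content: if $s$ were recurrent under some $\pi$, its recurrent class $R$ is closed, cannot contain the absorbing states of $\mathcal{S}\setminus\mathcal{C}$ (each of which is its own singleton class by the construction in \eqref{eq:transition-function}), hence $R\subseteq\mathcal{C}$, so the chain started at $s$ stays in $\mathcal{C}$ forever almost surely, and via the $\tilde{s}$/$\tilde{x}$ identification this puts $s\in\mathcal{K}$ --- contradiction. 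This is the right structure, and it is the step the paper skips. Two small points to tighten: (i) state only the direction you need, namely that transience implies $\lim_k\mathbb{P}^\pi_s[\tilde{s}_k=s]=0$ (the full ``iff'' with the limit is delicate for periodic recurrent states, where the unaveraged limit need not exist); (ii) when you invoke closedness of $R$, cite or briefly note the standard fact that in a finite chain a recurrent state's communicating class is closed, since that is what upgrades ``$s$ is recurrent'' to ``the chain never leaves $R$ with probability one,'' which is precisely the probability-one statement needed to conclude $s\in\mathcal{K}=\mathcal{K}_1$.
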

\begin{proof}
Suppose, for the sake of deriving a contradiction, that $s \in  \mathcal{C} \setminus \mathcal{K}$ is considered as \emph{recurrent} if and only if the following summation of time step transition probability is: 
\[
\sum_{k=0}^\infty p^{k}(s|s) = \infty \quad \forall s \in \mathcal{C} \setminus \mathcal{K}
\]
where 
\[
\sum_{k=0}^{\infty}p^{k}(s|s) = 1 + p(s|s) + p^2(s|s) + p^3(s|s) + \dots + p^\infty(s|s).\]
Then, if and only if the
limiting probability equals 
\(p(s|s)^\infty = 
   \lim_{k \to \infty} \mathbb{P}^{\pi} _s\!\left[\tilde{s}_k = s \right] = 1,\forall \pi \in \PiSD . 
  \)
Whereas this contradicts the fact that the state is transient if and only if
\[
\sum_{k=0}^\infty p^{k}(s|s) <\infty \quad \forall s \in \mathcal{C} \setminus \mathcal{K}.
\]
Hence the sum must be finite and
\(
p(s|s)^\infty  =
\lim_{k \to \infty} \mathbb{P}^{\hat{\pi}}_s[\tilde{s}_k= s] = 0 \quad \forall s \in \mathcal{C} \setminus \mathcal{K}. 
\)
Therefore, $s \in \mathcal{C} \setminus \mathcal{K}$ is \emph{transient}.
\end{proof}

\begin{lemma} 
For all $s \in \mathcal{S} \setminus \mathcal{C}$,
\label{lem:gain-failed}
  \[
    g\opt(s) = 0.
  \]
\end{lemma}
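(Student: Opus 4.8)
The plan is to exploit the absorbing structure that the transition function in~\eqref{eq:transition-function} imposes on states outside the constraint set. The key observation is that for any $s \in \mathcal{S} \setminus \mathcal{C}$, the definition gives $p(s,a,s') = \mathbbm{1}\{s' = s\}$ for every action $a \in \mathcal{A}$, so $s$ is absorbing under every policy: once the process occupies $s$, it remains there for all subsequent times no matter which actions are selected. I would make this the anchor of the argument and reduce everything else to a direct evaluation of the gain.

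First I would fix an arbitrary policy $\pi \in \PiSD$ and start the process in a state $s \in \mathcal{S} \setminus \mathcal{C}$. Because $s$ is absorbing, a one-line induction on $k$ shows that $\mathbb{P}^{\pi}_s[\tilde{s}_k = s] = 1$ for all $k \in \Nats$; that is, the trajectory is deterministically constant at $s$. Substituting this into the per-step reward~\eqref{eq:reward_indicator_function}, and using that $\tilde{s}_k = s \notin \mathcal{C}$ almost surely, every summand vanishes: $\mathbbm{1}\{\tilde{s}_k \in \mathcal{C}\} = \mathbbm{1}\{s \in \mathcal{C}\} = 0$ for all $k$. Hence, for every horizon $N$,
\[
\frac{1}{N}\,\mathbb{E}^{\pi}_s\!\left[\sum_{k=0}^{N-1}\mathbbm{1}\{\tilde{s}_k \in \mathcal{C}\}\right] = 0,
\]
and taking $\limsup$ as $N \to \infty$ of the constant-zero sequence gives $g^{\pi}(s) = 0$.

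Finally, since this identity holds for every $\pi \in \PiSD$, the optimality definition in~\eqref{eq:opt_policy} yields $g\opt(s) = \max_{\pi \in \PiSD} g^{\pi}(s) = 0$, which is the claim. I do not expect a genuine obstacle in this lemma; the argument is essentially a consequence of the construction in~\eqref{eq:transition-function}. The only point requiring a little care is the reading of the gain in~\eqref{eq:av_MDP_gain_bias} as conditioned on the start state $s$ rather than on the initial distribution $\mu$, so that the absorbing property is applied to the trajectory emanating from $s$ itself; once that interpretation is fixed, the collapse of every reward term to zero is immediate.
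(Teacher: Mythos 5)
Your proposal is correct and is essentially the same argument the paper gives, only written out in full: the paper's proof is a one-sentence appeal to the absorbing structure of \eqref{eq:transition-function} and the zero reward outside $\mathcal{C}$, while you supply the induction showing the trajectory is constant at $s$ and the resulting vanishing of every reward term. Your remark about reading the gain as conditioned on the start state $s$ rather than on $\mu$ correctly resolves a notational looseness in \eqref{eq:av_MDP_gain_bias}.
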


\begin{proof}
This result follows directly from the structure of the transition probabilities defined in \eqref{eq:transition-function}. States outside the constraint set $\mathcal{C}$ transition only to themselves, which are classified as unsafe and do not yield any reward.
\end{proof}

\begin{lemma}
\label{lem:gain-one-safe}
For all \( s \in \mathcal{S} \), we have
\[
  s \in \mathcal{K} \;\Longleftrightarrow\; g^{\star}(s) = 1.
\]
\end{lemma}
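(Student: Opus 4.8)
The plan is to prove the two implications separately, with the forward direction being essentially immediate and the reverse direction carrying the analytical content. Throughout, I would use that the MDP transitions in \cref{eq:transition-function} coincide with the original dynamics \cref{eq:stochastic_process} as long as the state remains in $\mathcal{C}$, so that $\P^{\pi}_s[\tilde{s}_k \in \mathcal{C}, \forall k \in \Nats] = \P^{\pi}_s[\tilde{x}_k \in \mathcal{C}, \forall k \in \Nats]$ for every $\pi \in \PiSD$.

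For the forward direction ($s \in \mathcal{K} \Rightarrow g\opt(s) = 1$): since $\mathcal{K} = \mathcal{K}_1$, there is a policy $\hat{\pi} \in \PiSD$ with $\P^{\hat{\pi}}_s[\tilde{x}_k \in \mathcal{C}, \forall k \in \Nats] = 1$, hence also $\P^{\hat{\pi}}_s[\tilde{s}_k \in \mathcal{C}, \forall k] = 1$. Then every reward $\mathbbm{1}\{\tilde{s}_k \in \mathcal{C}\}$ equals $1$ almost surely, so each finite Cesàro average equals $1$ and $g^{\hat{\pi}}(s) = 1$. Because the reward is bounded above by $1$, we have $g^{\pi}(s) \le 1$ for every $\pi$, so $g\opt(s) \le 1$; combined with $g\opt(s) \ge g^{\hat{\pi}}(s) = 1$ this yields $g\opt(s) = 1$.

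For the reverse direction ($g\opt(s) = 1 \Rightarrow s \in \mathcal{K}$): I would take the optimal stationary policy $\pi\opt$ from \cref{eq:opt_policy}, which attains $g^{\pi\opt}(s) = g\opt(s) = 1$. The key structural fact is that by \cref{eq:transition-function} the set $\mathcal{S}\setminus\mathcal{C}$ is absorbing and yields zero reward: once a trajectory leaves $\mathcal{C}$ it never returns and collects reward $0$ thereafter. Consequently, along any trajectory the average $\tfrac{1}{N}\sum_{k=0}^{N-1}\mathbbm{1}\{\tilde{s}_k \in \mathcal{C}\}$ converges pointwise to the survival indicator: it equals $1$ on trajectories that stay in $\mathcal{C}$ forever, and equals $T/N \to 0$ on trajectories that first exit at some finite time $T$. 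With this pointwise limit in hand, I would apply the bounded convergence theorem (the averages are uniformly bounded by $1$) to interchange limit and expectation, obtaining
\[
 g^{\pi\opt}(s) = \lim_{N \to \infty} \frac{1}{N}\,\E^{\pi\opt}_s\!\left[\sum_{k=0}^{N-1}\mathbbm{1}\{\tilde{s}_k \in \mathcal{C}\}\right] = \P^{\pi\opt}_s\!\left[\tilde{s}_k \in \mathcal{C}, \forall k \in \Nats\right].
\]
Since the left-hand side is $1$, the survival probability is $1$, and transferring back to $\tilde{x}_k$ via the agreement of the transition laws inside $\mathcal{C}$ gives $\P^{\pi\opt}_s[\tilde{x}_k \in \mathcal{C}, \forall k] = 1$, which is exactly the membership condition for $\mathcal{K}_1 = \mathcal{K}$.

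The main obstacle I anticipate is the rigorous justification of the limit interchange in the reverse direction: verifying that the per-trajectory averages genuinely converge to the survival indicator and that the $\limsup$ in the gain definition may therefore be replaced by an honest limit. This step hinges entirely on the absorbing, zero-reward structure of $\mathcal{S}\setminus\mathcal{C}$, so I would make that dichotomy (stay forever versus exit at finite $T$) precise before invoking bounded convergence. A secondary care point is the bookkeeping that identifies the MDP process $\tilde{s}_k$ with the control process $\tilde{x}_k$ while the state remains in $\mathcal{C}$.
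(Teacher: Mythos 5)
Your forward direction is essentially identical to the paper's: exhibit a policy that keeps the process in $\mathcal{C}$ almost surely, note every Ces\`aro average equals $1$, and cap $g\opt(s)$ at $1$ because rewards are bounded by $1$. For the reverse direction you take a genuinely different route. The paper argues the contrapositive ($s \notin \mathcal{K} \Rightarrow g\opt(s) < 1$): for each policy it identifies a first time $\zeta$ at which exit from $\mathcal{C}$ has positive probability, unrolls the multichain optimality equation to that time, and uses \cref{lem:gain-failed} (zero gain off $\mathcal{C}$) to bound $g^{\pi}(s) \le 1 - \P^{\pi}_s\left[\tilde{s}_{\zeta} \notin \mathcal{C}\right] < 1$ for every $\pi \in \PiSD$. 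You instead prove the direct implication via the pathwise identity: since $\mathcal{S}\setminus\mathcal{C}$ is absorbing with zero reward under \cref{eq:transition-function}, the averages $\tfrac{1}{N}\sum_{k=0}^{N-1}\mathbbm{1}\{\tilde{s}_k\in\mathcal{C}\}$ converge along every trajectory to the survival indicator, and bounded convergence yields $g^{\pi}(s) = \P^{\pi}_s\left[\tilde{s}_k \in \mathcal{C},\ \forall k \in \Nats\right]$ for every policy. This is sound and arguably cleaner: it turns the $\limsup$ into an honest limit, dispenses with \cref{lem:gain-failed} as a separate input, and gives a single identity from which both directions of the lemma (and the core of \cref{theorem:alpha_level}) drop out immediately. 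What it costs is the measure-theoretic care you already flag --- making the dichotomy (survive forever versus absorb at a finite exit time) precise before invoking dominated convergence --- whereas the paper stays inside the algebraic machinery of the average-reward optimality equations. Both arguments are valid; if you write yours up, state explicitly that absorption off $\mathcal{C}$ and the agreement of $\tilde{s}_k$ with $\tilde{x}_k$ up to the first exit both follow from \cref{eq:transition-function}, since the entire limit argument rests on those two structural facts.
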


\begin{proof}
\medskip
\noindent
First, we show \(s\in \mathcal{K} \implies g\opt(s)=1\), there exists a deterministic stationary
        policy \(\pi\) such that
        \[
          \P^{\pi}_s \left[  \tilde{x}_k \in \mathcal{C}\right] = 1\quad\text{a.s.\ for every }k\in\mathbb N.
        \]
Under this policy 
        \(\mathbbm{1}\{\tilde{x}_k \in \mathcal{C}\} = 1\) for all \(k\).
Hence: \[g^{\pi}(s) = \limsup_{N\to\infty}\frac{1}{N}\sum_{k=0}^{N-1} 1 = 1.\]
 Since $r(s,a,s') \leq 1, \forall s,s' \in \mathcal{S}, a \in \mathcal{A}$, 
\(
g^{\pi} \le g^{\star}(s)\le 1 .
\)

Second, we start $  s \notin \mathcal{K} \;\implies\; g\opt(s) < 1$ (since $g\opt(s) \le 1$).
Since $s\notin \mathcal{K}$, we have that $\forall \pi \in \PiSD$, $\exists \zeta > 0$, such that
  \[
    \P^{\pi}_s \left[ \tilde{x}_{\zeta} \notin \mathcal C \right] > 0 ,
    \qquad
    \P^{\pi}_s \left[ \tilde{x}_k \notin \mathcal C \right] = 0, \quad\forall k < \zeta .
  \]
By the construction of transition probabilities in \eqref{eq:transition-function}: 
  \[
    \P^{\pi}_s \left[ \tilde{s}_{\zeta} \notin \mathcal C \right] = \P^{\pi}_s \left[ \tilde{x}_{\zeta} \notin \mathcal C \right] > 0 ,
    \qquad
    \P^{\pi}_s \left[ \tilde{s}_k \notin \mathcal C \right] = 0, \quad\forall k < \zeta .
  \]
According to 
multichain optimality equation~\citep{Puterman2005}: 
\[
    \max_{a \in \mathcal{A}} \left\{ \sum_{s' \in \mathcal{S}} p(s, a, s') g^\pi(s') - g^\pi(s) \right\} = 0,
\]
for a fixed policy and specific probability set up, the gain is:  
\[ g^{\pi}(s) = \sum_{s'\in \mathcal{S}} \P^{\pi}_{s} \left[ \tilde{s}_{\zeta} = s' \right] \cdot g^{\pi}(s'),
    \quad
    \forall \pi\in \PiSD. \]
We know from \cref{lem:gain-failed} that for each $s' \in  \mathcal{S} \setminus \mathcal{C}$ \( g^{\pi}(s') = 0, \hspace{2pt}\forall \pi \in \PiSD.\)
Then:
  \[
    g^{\pi}(s) = \sum_{s'\in \mathcal{S}} \P^{\pi}_{s} \left[ \tilde{s}_{\zeta} = s' \right] \cdot g^{\pi}(s')
    \le \left(1- \P^{\pi}_s \left[ \tilde{s}_{\zeta}  \in  \mathcal{S} \setminus \mathcal{C} \right] \right)
    < 1,
    \quad
    \forall \pi\in \PiSD.
  \]
  Therefore, from the existence of an optimal policy \(
   g\opt(s) \le 1.\)
\medskip
\noindent
Since both directions hold, we conclude
\(s\in \mathcal{K} \;\Longleftrightarrow\; g^{\star}(s)=1.\)
\end{proof}
With reference to \cref{lem:gain-failed}, the following theorem shows that any state with a gain equal to 1 is considered a safe state, and the union of these states forms the safe set. Intuitively, the theorem implies that safety is ensured if there exists a policy such that the expected average reward, defined by the indicator function of the constraint set, remains 1 at all time steps.
\end{document}